\newfont{\bbb}{msbm10 scaled 500}
\newfont{\bb}{msbm10 scaled 1100}
\newcommand{\xv}{{\bf x}}
\newcommand{\yv}{{\bf y}}
\newcommand{\Xm}{{\bf X}}
\newcommand{\Ym}{{\bf Y}}
\newcommand{\Ac}{{\cal A}}
\newcommand{\Cc}{{\cal C}}
\newcommand{\Nc}{{\cal N}}
\newcommand{\Oc}{{\cal O}}
\newcommand{\Pc}{{\cal P}}
\newcommand{\Qc}{{\cal Q}}
\newcommand{\Rc}{{\cal R}}
\newcommand{\Sc}{{\cal S}}
\newcommand{\Wc}{{\cal W}}
\newcommand{\Xc}{{\cal X}}
\newcommand{\Yc}{{\cal Y}}
\newtheorem{theorem}{Theorem}
\newtheorem{corollary}[theorem]{Corollary}
\newtheorem{definition}[theorem]{Definition}
\author{
    \authorblockN{Onur Ozan Koyluoglu and Hesham El~Gamal
    \thanks{Hesham El Gamal also serves as the Director for the
    Wireless Intelligent Networks Center (WINC), Nile University,
    Cairo, Egypt.}\\}
    \authorblockA{
    Department of Electrical and Computer Engineering\\
    The Ohio State University\\
    Columbus, OH 43210, USA\\
    Email: \{koyluogo,helgamal\}@ece.osu.edu}\\}
\title{On the Secrecy Rate Region for the Interference Channel}
\begin{document}

\maketitle


\begin{abstract}

This paper studies interference channels with security constraints.
The existence of an external eavesdropper in a two-user interference channel
is assumed, where the network users would like to secure their messages from
the external eavesdropper. The cooperative binning and channel prefixing scheme
is proposed for this system model which allows users to cooperatively add
randomness to the channel in order to degrade the observations of the external
eavesdropper. This scheme allows users to add randomness to the channel
in two ways: 1) Users cooperate in their design of the binning codebooks,
and 2) Users cooperatively exploit the channel prefixing technique.
As an example, the channel prefixing technique is exploited in the
Gaussian case to transmit a superposition signal consisting of
binning codewords and independently generated noise samples. Gains
obtained form the cooperative binning and channel prefixing scheme
compared to the single user scenario reveals the positive effect of
interference in increasing the network security. Remarkably, interference
can be exploited to cooperatively add randomness into the network in
order to enhance the security.

\end{abstract}


%


\section{Introduction}\label{sec:Introduction}

In this work, we consider two-user interference channels with an external
eavesdropper. Without the secrecy constraints, the interference channel is
studied extensively in the literature. However, the capacity region is still
not known except for some special cases
\cite{Sato:The81,Shang:A07,Annapureddy:A08,Motahari:A08}.
Interference channels with confidential messages is recently studied
by~\cite{Liang:Cognitive07,Liu:Discrete08,Koyluoglu:On08}. Nonetheless, the
external eavesdropper scenario has not been addressed extensively in the
literature yet. In fact, the only relevant work regarding the security of the
interference channels with an external eavesdropper is the study of the secure
degrees of freedom (DoF) in the $K$-user Gaussian interference channels under
frequency selective fading models~\cite{Koyluoglu:On08}, where it is shown
that positive secure DoFs are achievable for each user in the network.

In this work, we propose the cooperative binning and channel prefixing scheme
for (discrete) memoryless interference channels with an external eavesdropper.
The proposed scheme allows for cooperation in adding randomness to the channel
in two ways: $1$) Cooperative binning: The random binning technique
of~\cite{Wyner:Wiretap75} is cooperatively exploited at both users.
$2$) Channel prefixing: Users exploit the channel prefixing technique
of~\cite{Csiszar:Broadcast78} in a cooperative manner.
The proposed scheme also utilizes the message-splitting technique
of~\cite{Han:A81} and partial decoding of the interfering signals is made
possible at the receivers. The achievable secrecy rate region with
the proposed scheme is given. For the Gaussian interference channel, the
channel prefixing technique is exploited to inject artificially generated noise
samples into the network, where we also allow power control at transmitters to
enhance the security of the network.

The proposed scheme is closely related with that
of~\cite{Lai:The,Tekin:The08,Negi:Secret05}.\cite{Lai:The} considered the
relay-eavesdropper channel and proposed the noise-forwarding scheme where the
relay node sends a codeword from an independently generated codebook to add
randomness to the network in order to enhance the security of the main channel.
\cite{Tekin:The08} considered Gaussian multiple-access wire-tap channels
and proposed the cooperative jamming scheme in which users transmit their
codewords or add randomness to the channel by transmitting noise samples,
but not both. The approach in this sequel, when specialized to the Gaussian
multiple access channel with an external eavesdropper, generalizes and extends
the proposed achievable regions given in~\cite{Tekin:The08}, due to the
implementation of simultaneous cooperative binning and jamming at the
transmitters together with more general time-sharing approaches.
This simultaneous transmission of secret messages and noise samples from
transmitters is considered by~\cite{Negi:Secret05}. In~\cite{Negi:Secret05},
authors proposed artificially generated noise injection schemes for multi-transmit
antenna wire-tap channels, in which the superposition of a secrecy signal and
an artificially generated noise is transmitted from the transmitter, where the
noisy transmission only degrades the eavesdropper's channel. For the single
transmit antenna case, wire-tap channels with helper nodes is considered,
in which helper nodes transmit artificially generated noise samples
in order to degrade the eavesdropper's channel. Remarkable, exploitation of
the channel prefixing technique was transparent in these previous studies.
The proposed scheme in this work shows that the benefit of cooperative jamming
scheme of~\cite{Tekin:The08} and noise injection scheme of~\cite{Negi:Secret05}
originates from the channel prefixing technique.
In addition, compared to~\cite{Liu:Discrete08}, the proposed scheme allows for
cooperation via \emph{both} binning and channel prefixing techniques,
whereas in~\cite{Liu:Discrete08} one of the transmitters is allowed to generate
and transmit noise together with the secret signal and cooperation among
network users as considered in this sequel was not implemented for the confidential
message scenario.

The rest of this work is organized as follows. Section II introduces
the system model. In Section III, the main result for discrete
memoryless interference channels is given. Section IV is devoted to
some examples of the proposed scheme for Gaussian channels. Finally,
we provide some concluding remarks in Section V.


\section{System Model}
\label{sec:Model}

We consider a two-user interference channel with an external eavesdropper
(IC-EE), comprised of two transmitter-receiver pairs and an additional
eavesdropping node. The discrete memoryless IC-EE is denoted by
$$(\Xc_1 \times \Xc_2, p(y_1,y_2,y_e|x_1,x_2),
\Yc_1 \times \Yc_2 \times \Yc_e),$$ for some finite sets
$\Xc_1, \Xc_2, \Yc_1, \Yc_2, \Yc_e$.
Here the symbols $(x_1,x_2)\in \Xc_1 \times \Xc_2$ are the
channel inputs and the symbols $(y_1,y_2,y_e)\in \Yc_1 \times \Yc_2
\times \Yc_e$ are the channel outputs observed at the
decoder $1$, decoder $2$, and at the eavesdropper, respectively.
The channel is memoryless and time-invariant:
\footnote{In this work, we have the following notation:
Vectors are denoted as $\xv^i=\{x(1),\cdots,x(i)\}$, where
we omit the $i$ if $i=n$, i.e., $\xv=\{x(1),\cdots,x(n)\}$.
Random variables are denoted with capital letters ($X$),
and random vectors are denoted as bold-capital letters ($\Xm^i$).
Again, we drop the $i$ for $\Xm=\{X(1),\cdots,X(n)\}$.
Lastly, $[x]^+\triangleq\max\{0,x\}$,
$\bar{\alpha}\triangleq 1-\alpha$, and
$\gamma(x)\triangleq\frac{1}{2}\log_2(1+x)$.}
$$p(y_1(i),y_2(i),y_e(i)|\xv_1^i,\xv_2^i,
\yv_1^{i-1},\yv_2^{i-1},\yv_e^{i-1})$$
$$= p(y_1(i),y_2(i),y_e(i)|x_1(i),x_2(i)).$$
We assume that each transmitter $k\in\{1,2\}$ has a secret message
$W_k$ which is to be transmitted to the respective receivers in $n$
channel uses and to be secured from the external eavesdropper.
In this setting,
an $(n,M_1,M_2,P_{e,1},P_{e,2})$ secret codebook has the following components:

$1$) The secret message sets $\Wc_k=\{1,...,M_k\}$ for
transmitter $k=1,2$.

$2$) Encoding function $f_k(.)$ at transmitter $k$
which map the secret messages to the transmitted symbols, i.e.,
$f_k:w_k\to \Xm_k$ for each $w_k\in\Wc_k$ for $k=1,2$.

$3$) Decoding function $\phi_k(.)$ at receiver $k$
which map the received symbols to estimate of the message:
$\phi_k(\Ym_k)=\hat{w}_k$ for $k=1,2$.

Reliability of the transmission of user $k$ is measured by $P_{e,k}$, where
\begin{eqnarray}
P_{e,k}\triangleq\frac{1}{M_1 M_2}\sum\limits_{(w_1,w_2)\in \Wc_1\times\Wc_2}
\textrm{Pr} \left\{\phi_k(\Ym_k)\neq w_k | E_{w_1,w_2}\right\},   \nonumber
\end{eqnarray}
where $E_{w_1,w_2}$ is the event that $(w_1,w_2)$ is transmitted from the
transmitters.

For the secrecy requirement, the level of ignorance of the eavesdropper
with respect to the secured messages is measured by the equivocation
rate
$$\frac{1}{n}H\left(W_1,W_2|\Ym_e\right).$$
We say that the rate tuple $(R_1,R_2)$ is achievable for the IC-EE
if, for any given $\epsilon > 0$, there exists an
$(n,M_1=2^{nR_1},M_2=2^{nR_2},P_{e,1},P_{e,2})$ secret codebook such that,
\begin{eqnarray}
\max\{P_{e,1},P_{e,2}\} \leq \epsilon \nonumber,
\end{eqnarray} and
\begin{eqnarray}\label{eq:secrecy}
R_1+R_2 -\frac{1}{n}H\left(W_1,W_2|\Ym_e\right) &\leq& \epsilon
\end{eqnarray}
for sufficiently large $n$. The secrecy capacity region is the closure of
the set of all achievable rate pairs $(R_1,R_2)$ and is denoted as
$\mathbb{C}^{\textrm{IC-EE}}$.

\subsection{The Gaussian Interference Channel with an External Eavesdropper
in Standard Form}
\label{sec:StandardForm}

The Gaussian interference channel in standard form is given
in~\cite{Carleial:Interference78}. We have the same transformation here
for the Gaussian interference channel with an external eavesdropper (GIC-EE)
model. We remark that the channel capacity will remain
the same as the transformations are invertible.
We represent the average power constraints of the transmitters as
$P_k$, where codewords should
satisfy $\frac{1}{n}\sum\limits_{t=1}^n
(X_k(t))^2\leq P_k$ for $k=1,2$. Here
the input-output relationship, i.e., $p(y_1,y_2,y_e|x_1,x_2)$,
changes to the following:
\begin{eqnarray}\label{eq:GIC-EE}
Y_1 & = & X_1 + \sqrt{c_{21}} X_2 + N_1 \nonumber\\
Y_2 & = & \sqrt{c_{12}} X_1 + X_2 + N_2 \\
Y_e & = & \sqrt{c_{1e}} X_1 + \sqrt{c_{2e}} X_2 + N_e \nonumber,
\end{eqnarray}
where $N_{k}\sim\Nc(0,1)$ for $k=1,2,e$ as depicted in Fig.~$1$.
The secrecy capacity region of the GIC-EE is denoted as
$\mathbb{C}^{\textrm{GIC-EE}}$.

\begin{figure}[htb] 
    \centering
    \includegraphics[width=0.9\columnwidth]{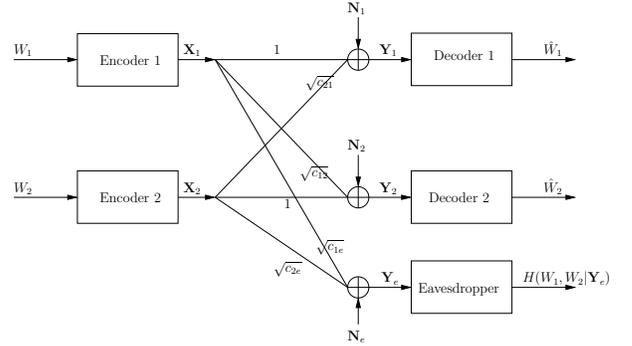}
    \caption{The Gaussian interference channel
    with an external eavesdropper in standard form.
    $N_{k}\sim\Nc(0,1)$ for $k=1,2,e$.}
\end{figure}


\section{The Discrete Memoryless Interference Channel
with an External Eavesdropper}
\label{sec:TheDiscreteMemorylessInterferenceChannelwithanExternalEavesdropper}

In this section, we introduce the proposed cooperative binning and
channel prefixing scheme for the IC-EE model. With this scheme,
transmitters design their secrecy codebooks using the random binning
technique~\cite{Wyner:Wiretap75}. This binning structure in the
codebook let a transmitter to add randomness in its own signals.
However, the price of adding extra randomness to secure the transmission
appear as a rate loss in the achievable rate expressions. In our scenario,
the proposed strategy allows for cooperation in design of these binning
codebooks, and allows for cooperation in prefixing the channel as we utilize
the channel prefixing technique of~\cite{Csiszar:Broadcast78} at both users.
Hence, users of the interference channel will add only
\emph{sufficient} amount of randomness as the other
user will help to increase the randomness seen by the eavesdropper. The
achievable secure rate region with this scheme is described below.

First consider auxiliary random variables $Q$, $C_1$, $S_1$, $O_1$, $C_2$,
$S_2$, and $O_2$ defined on arbitrary finite sets $\Qc$, $\Cc_1$, $\Sc_1$, $\Oc_1$,
$\Cc_2$, $\Sc_2$, and $\Oc_2$, respectively.
Now, let $\Pc$ be the set of all joint distributions of the random variables
$Q$, $C_1$, $S_1$, $O_1$, $C_2$, $S_2$, $O_2$, $X_1$, $X_2$, $Y_1$, $Y_2$,
and $Y_e$ that factors as
$p(q,c_1,s_1,o_1,c_2,s_2,o_2,x_1,x_2,y_1,y_2,y_e)$ $=$
$p(q)$
$p(c_1|q)$ $p(s_1|q)$ $p(o_1|q)$
$p(c_2|q)$ $p(s_2|q)$ $p(o_2|q)$
$p(x_1|c_1,s_1,o_1,q)$
$p(x_2|c_2,s_2,o_2,q)$
$p(y_1,y_2,y_e|x_1,x_2)$.
Here, the variable $Q$ serves as a time-sharing parameter.
  See, for example,~\cite{Han:A81,ThomasAndCover} for a discussion
  on time-sharing parameters.
 The variable $C_1$ is used to construct the \emph{common} secured
  signal of transmitter $1$ that has to be decoded at both receivers, where
  the random binning technique of~\cite{Wyner:Wiretap75} is used for this
  construction.
 The variable $S_1$ is used to construct the \emph{self} secured signal
  that has to be decoded at receiver $1$ but not at receiver $2$, where
  the random binning technique of~\cite{Wyner:Wiretap75} is used for
  this construction.
 The variable $O_1$ is used to construct \emph{other} signal of
  transmitter $1$ that has to be decoded at receiver $2$ but not at
  receiver $1$, where the conventional random codebook construction, see
  for example~\cite{ThomasAndCover}, is used for this signal, i.e., no
  binning is implemented.
 Similarly, $C_2$, $S_2$, and $O_2$ are utilized at user $2$.
 Finally, it is important to remark that the channel prefixing
  technique of~\cite{Csiszar:Broadcast78} is exploited with this
  construction as we transformed the channel $p(y_1,y_2,y_e|x_1,x_2)$
  to $p(y_1,y_2,y_e|c_1,s_1,o_1,c_2,s_2,o_2,q)$ using the prefixes
  $p(x_1|c_1,s_1,o_1,q)$ and $p(x_2|c_2,s_2,o_2,q)$.

To ease the presentation, we first state the following definitions.
We define $T_1\triangleq C_1$, $T_2\triangleq S_1$, $T_3\triangleq O_1$,
$T_4\triangleq C_2$, $T_5\triangleq S_2$, $T_6\triangleq O_2$ and
corresponding rates $R_{T_i}$ and $R_{T_i}^x$. Note that we choose
$R_{O_1}=R_{O_2}=0$ below. Also, we define
$T_{\Sc}\triangleq\{T_i|i\in\Sc\}$.
\begin{definition}
$\Rc_1(p)$ is the set of all tuples
$(R_{C_1},R_{C_1}^x,R_{S_1},R_{S_1}^x,R_{C_2},R_{C_2}^x,R_{O_2}^x)$ satisfying
\begin{eqnarray}
\sum\limits_{i\in\Sc} R_{T_i} + R_{T_i}^x \leq I(T_{\Sc};Y_1|T_{\Sc^c},Q),
\forall \Sc \subseteq \{1,2,4,6\},
\end{eqnarray}
for a given joint distribution $p$.
\end{definition}
\begin{definition}
$\Rc_2(p)$ is the set of all tuples
$(R_{C_2},R_{C_2}^x,R_{S_2},R_{S_2}^x,R_{C_1},R_{C_1}^x,R_{O_1}^x)$ satisfying
\begin{eqnarray}
\sum\limits_{i\in\Sc} R_{T_i} + R_{T_i}^x \leq I(T_{\Sc};Y_2|T_{\Sc^c},Q),
\forall \Sc \subseteq \{1,3,4,5\},
\end{eqnarray}
for a given joint distribution $p$.
\end{definition}
\begin{definition}
$\Rc_e(p)$ is the set of all tuples
$(R_{C_1}^x,R_{S_1}^x,R_{O_1}^x,R_{C_2}^x,R_{S_2}^x,R_{O_2}^x)$ satisfying
\begin{eqnarray}
\sum\limits_{i\in\Sc} R_{T_i}^x &\leq& I(T_{\Sc};Y_e|T_{\Sc^c},Q),
\forall \Sc \subsetneqq \{1,\cdots,6\}, \nonumber\\
\sum\limits_{i\in\{1,2,3,4,5,6\}} R_{T_i}^x &=&
I(T_1,T_2,T_3,T_4,T_5,T_6;Y_e|Q),
\end{eqnarray}
for a given joint distribution $p$.
\end{definition}

\begin{definition}
$\Rc(p)$ is the closure of all $(R_1,R_2)$ satisfying
\begin{eqnarray}
R_1&=&R_{C_1}+R_{S_1}, \nonumber\\
R_2&=&R_{C_2}+R_{S_2}, \nonumber\\
(R_{C_1},R_{C_1}^x,R_{S_1},R_{S_1}^x,R_{C_2},R_{C_2}^x,R_{O_2}^x) &\in& \Rc_1(p), \nonumber\\
(R_{C_2},R_{C_2}^x,R_{S_2},R_{S_2}^x,R_{C_1},R_{C_1}^x,R_{O_1}^x) &\in& \Rc_2(p), \nonumber\\
(R_{C_1}^x,R_{S_1}^x,R_{O_1}^x,R_{C_2}^x,R_{S_2}^x,R_{O_2}^x) &\in& \Rc_e(p), \nonumber
\end{eqnarray}
and
\begin{eqnarray}\label{eq:RI}
R_{C_1}\geq 0, R_{C_1}^x\geq 0, R_{S_1}\geq 0, R_{S_1}^x\geq 0, R_{O_1}^x\geq 0,\nonumber\\
R_{C_2}\geq 0, R_{C_2}^x\geq 0, R_{S_2}\geq 0, R_{S_2}^x\geq 0, R_{O_2}^x\geq 0,
\end{eqnarray}
for a given joint distribution $p$.
\end{definition}

We now state the main result of the paper.
The achievable secrecy rate region using the cooperative
binning and channel prefixing scheme is as follows.
\begin{theorem}\label{thm:R}
$\Rc^{\textrm{IC-EE}}
\triangleq
\textrm{ the closure of }
\left\{
\bigcup\limits_{p\in\Pc} \Rc(p)
\right\}
\subset \mathbb{C}^{\textrm{IC-EE}}$.
\end{theorem}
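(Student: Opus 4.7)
The plan is to establish Theorem~\ref{thm:R} by a random coding argument that realizes the cooperative binning and channel prefixing structure described above, analyzing the error probability at the two legitimate receivers and the equivocation at the eavesdropper.

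Fix a joint distribution $p\in\Pc$ and nonnegative rates $(R_{C_1},R_{C_1}^x,R_{S_1},R_{S_1}^x,R_{O_1}^x,R_{C_2},R_{C_2}^x,R_{S_2},R_{S_2}^x,R_{O_2}^x)$ satisfying Definitions~1--3, with $R_k=R_{C_k}+R_{S_k}$. I would first generate a time-sharing sequence $\qv$ i.i.d.\ according to $p(q)$ and share it with all terminals. Conditional on $\qv$, independently generate codebooks for $C_1,S_1,C_2,S_2$ using Wyner-style binning: for each $T_i\in\{C_1,S_1,C_2,S_2\}$, produce $2^{n(R_{T_i}+R_{T_i}^x)}$ i.i.d.\ codewords arranged in $2^{nR_{T_i}}$ bins of size $2^{nR_{T_i}^x}$ (bin index = message, sub-index = randomness). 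For $O_1,O_2$, generate conventional random codebooks of sizes $2^{nR_{O_1}^x}$ and $2^{nR_{O_2}^x}$ (no binning, consistent with $R_{O_1}=R_{O_2}=0$). Finally, generate $\xv_1,\xv_2$ symbol-by-symbol through the prefixes $p(x_1|c_1,s_1,o_1,q)$ and $p(x_2|c_2,s_2,o_2,q)$.

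For encoding, transmitter $k$ maps $W_k=(w_{C_k},w_{S_k})$ to the designated bins and draws the sub-indices, as well as the $O_k$ codeword, uniformly at random. For decoding, receiver~1 performs joint typical-set decoding of the tuple $(C_1,S_1,C_2,O_2)$ against $(\qv,\yv_1)$; by the standard multi-user packing lemma, error vanishes as $n\to\infty$ provided the sum of $R_{T_i}+R_{T_i}^x$ over every subset $\Sc\subseteq\{1,2,4,6\}$ is bounded by $I(T_\Sc;Y_1|T_{\Sc^c},Q)$, which is exactly the system defining $\Rc_1(p)$. An identical analysis at receiver~2 yields the constraints of $\Rc_2(p)$.

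The principal technical step, and the one I expect to be the main obstacle, is establishing weak secrecy $\frac{1}{n}H(W_1,W_2|\Ym_e)\geq R_1+R_2-\epsilon$. The intuition is that the binning rates are set so that, from $\Ym_e$, the eavesdropper can list-decode the full set of sub-indices (the ``dummy'' randomness) and the $O_k$ codewords while being left with uniform uncertainty over the message bin indices. Starting from
\[
H(W_1,W_2|\Ym_e)\geq H(W_1,W_2)-I(W_1,W_2;\Ym_e|Q),
\]
I would bound the mutual information by decomposing it through the auxiliary codewords $(T_1,\ldots,T_6)$ and applying Fano's inequality to the eavesdropper's decoding of the sub-indices. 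The subset inequalities in Definition~3 ensure that for every $\Sc\subsetneqq\{1,\ldots,6\}$ the conditional list of $\Sc$-sub-indices given the remaining codewords and $\Ym_e$ is small enough to decode reliably, while the equality constraint $\sum_i R_{T_i}^x = I(T_1,\ldots,T_6;Y_e|Q)$ makes the cancellation tight so that only the message rate $R_1+R_2$ remains as residual uncertainty. The delicate aspect is that six auxiliary variables produce $2^6$ coupled subset conditions, and the cooperative feature means that randomness injected by one transmitter must protect the other's message; keeping the $\Rc_1(p),\Rc_2(p),\Rc_e(p)$ constraints simultaneously consistent is what dictates the particular form of $\Rc_e(p)$. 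Once weak secrecy is established for fixed $p$, taking the closure of the union over $p\in\Pc$ yields $\Rc^{\textrm{IC-EE}}\subset\mathbb{C}^{\textrm{IC-EE}}$.
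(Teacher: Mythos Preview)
The paper does not actually prove Theorem~\ref{thm:R}: its proof reads, in full, ``The proof is omitted and will be provided in the journal version of this work.'' Hence there is no detailed argument to compare against. That said, your proposal faithfully implements the coding construction the paper spells out in the paragraphs immediately preceding the theorem---binned codebooks for $C_k,S_k$, unbinned codebooks for $O_k$, channel prefixing via $p(x_k|c_k,s_k,o_k,q)$, joint typicality decoding at the receivers yielding the $\Rc_1(p),\Rc_2(p)$ constraints, and an equivocation analysis driven by the $\Rc_e(p)$ constraints---so it is the intended approach and is consistent with what the omitted proof would contain.
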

\begin{proof}
The proof is omitted and will be provided
in the journal version of this work.
\end{proof}


\section{The Gaussian Interference Channel with an External Eavesdropper}
\label{sec:TheGaussianInterferenceChannelwithanExternalEavesdropper}

In this section, we provide some examples of the proposed coding scheme for
Gaussian interference channels and show that the proposed scheme provides
gains in securing the network by exploiting cooperative binning,
cooperative channel prefixing, and time-sharing techniques.

Firstly, we describe how the channel prefixing can be implemented in this
Gaussian scenario. Here, one can independently generate and transmit
noise samples for each channel use from the transmitters (without
constructing a codebook and sending one of its messages) to enhance
the security of the network. As there is no design of a
codebook at the interfering user for this noise transmission, receivers
and the eavesdropper can only consider this transmission as noise.
Accordingly, transmitter $k\in\{1,2\}$ uses power $P_k^b$ for the construction
of its (binning) codewords, which are explained in the previous section,
and obtains, somehow, the signal $X_k^b\sim\Nc(0,P_k^b)$. In addition, it
uses power $P_k^j$ for its jamming signal and generates i.i.d. noise samples
represented by $X_k^j\sim\Nc(0,P_k^j)$, where we choose
$P_k^b+P_k^j \leq P_k$. Then, it sends $X_k^b+X_k^j$ to the channel,
instead of just sending $X_k^b$.

Now, we can use the scheme proposed in the previous section for the design
of the signals $X_k^b$. Below we will use superposition coding
to construct this signal. But first,
for a rigorous presentation, we provide some definitions.
Let $\Ac$ denote the set of all tuples
$\left(P_1^c(q),P_1^s(q),P_1^o(q),P_2^c(q),P_2^s(q),P_2^o(q),P_1^j(q),P_2^j(q)\right)$
satisfying
$P_k^b(q) \triangleq P_k^c(q)+P_k^s(q)+P_k^o(q)$ and
$\sum\limits_{q\in\Qc}(P_k^b(q)+P_k^j(q))p(q) \leq P_k,$
for $k=1,2.$

Now, we define a set of joint distributions
$\Pc_1$ as follows.

$\Pc_1  \triangleq  \big\{p \: | \: p\in\Pc,$
$(P_1^c(q),P_1^s(q),P_1^o(q),P_2^c(q),P_2^s(q),P_2^o(q),P_1^j(q),P_2^j(q))\in\Ac,$
$C_1\sim\Nc(0,P_1^c(q))$, $S_1\sim\Nc(0,P_1^s(q))$, $O_1\sim\Nc(0,P_1^o(q))$,
$C_2\sim\Nc(0,P_2^c(q))$, $S_2\sim\Nc(0,P_2^s(q))$, $O_2\sim\Nc(0,P_2^o(q))$,
$X_1^j\sim\Nc(0,P_1^j(q))$, $X_2^j\sim\Nc(0,P_2^j(q))$,
$X_1=C_1+S_1+O_1+X_1^j$, $X_2=C_2+S_2+O_2+X_2^j \big\}$,
where the Gaussian model given in (\ref{eq:GIC-EE})
gives $p(y_1,y_2,y_e|x_1,x_2)$.

Then, the following region is achievable
for the Gaussian interference channel with an external eavesdropper.
\begin{corollary}\label{thm:RG}
$\Rc^{\textrm{GIC-EE}}
\triangleq
\textrm{ the closure of }
\bigg\{
\bigcup\limits_{p\in\Pc_1} \Rc(p)
\bigg\}
\subset  \mathbb{C}^{\textrm{GIC-EE}}$.
\end{corollary}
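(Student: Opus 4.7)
The plan is to derive Corollary~\ref{thm:RG} as a Gaussian specialization of Theorem~\ref{thm:R}, in the spirit of the standard passage from discrete memoryless wiretap-type results to their AWGN counterparts. First, I would fix a joint distribution $p\in\Pc_1$ and verify that, modulo the continuous-alphabet issue, this $p$ is a legitimate choice in $\Pc$: the auxiliary variables $C_k,S_k,O_k,X_k^j$ are declared mutually independent given $Q$, the prefix $X_k=C_k+S_k+O_k+X_k^j$ factors as $p(x_k|c_k,s_k,o_k,q)$ (a deterministic kernel, which is an allowable prefix since we can add an infinitesimal Gaussian perturbation if a non-degenerate prefix is required), and the channel $p(y_1,y_2,y_e|x_1,x_2)$ is precisely (\ref{eq:GIC-EE}). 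Thus the Markov/independence structure required by the definition of $\Pc$ is satisfied on each slice $Q=q$.

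Second, I would handle the power constraint. Since $C_k,S_k,O_k,X_k^j$ are zero-mean independent given $Q=q$, we have $E[X_k^2|Q=q]=P_k^c(q)+P_k^s(q)+P_k^o(q)+P_k^j(q)=P_k^b(q)+P_k^j(q)$, so the condition $\sum_q(P_k^b(q)+P_k^j(q))p(q)\le P_k$ that appears in the definition of $\Ac$ is exactly the average power constraint $\tfrac{1}{n}\sum_t E[X_k(t)^2]\le P_k$ under the time-sharing variable $Q$. Hence any $p\in\Pc_1$ is a valid coding distribution for the GIC-EE, and all mutual information terms appearing in $\Rc_1(p)$, $\Rc_2(p)$, $\Rc_e(p)$ are finite (the channel additive noises $N_1,N_2,N_e$ are unit-variance Gaussians and all auxiliaries have finite second moments).

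Third, I would invoke Theorem~\ref{thm:R}, which guarantees achievability of $\Rc(p)$ for each $p$ satisfying the prescribed factorization. Since $\Rc^{\textrm{GIC-EE}}$ is defined as the closure of $\bigcup_{p\in\Pc_1}\Rc(p)$, the achievability of every rate pair in this closure follows, yielding the claimed inclusion in $\mathbb{C}^{\textrm{GIC-EE}}$.

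The main obstacle, and the only nontrivial point, is that Theorem~\ref{thm:R} was stated for discrete memoryless channels with finite alphabets $\Xc_k,\Yc_k,\Yc_e$, whereas the Gaussian model has continuous alphabets and an input power constraint. I would resolve this by the standard discretization argument: quantize the auxiliaries $C_k,S_k,O_k,X_k^j$ and the inputs/outputs on a fine grid truncated to a large box, apply Theorem~\ref{thm:R} to the resulting discrete channel to obtain achievability of a quantized region, and then pass to the limit as the quantization bin-width tends to zero and the truncation level tends to infinity, using the continuity of mutual information under weak convergence for second-moment-bounded distributions together with the average power constraint (so that the quantized inputs satisfy a slightly relaxed power budget that approaches $P_k$). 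Taking closures absorbs the arbitrarily small gaps introduced by this limiting procedure, delivering exactly $\Rc^{\textrm{GIC-EE}}\subset\mathbb{C}^{\textrm{GIC-EE}}$.
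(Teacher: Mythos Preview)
Your proposal is correct and matches the paper's approach: the paper states Corollary~\ref{thm:RG} without proof, treating it as an immediate specialization of Theorem~\ref{thm:R} to the Gaussian model via the particular choice of Gaussian auxiliaries and the prefix $X_k=C_k+S_k+O_k+X_k^j$ (exactly the verification you outline in your first two steps). Your discretization argument to bridge the finite-alphabet hypothesis of Theorem~\ref{thm:R} to the continuous Gaussian setting is the standard device and is more explicit than anything the paper provides; the paper simply leaves this passage implicit, as is common in this literature.
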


We emphasize the way of implementing the channel prefixing technique
of~\cite[Lemma $4$]{Csiszar:Broadcast78}: $p(x_k|c_k,s_k,o_k,q)$ is
chosen by $X_k=C_k+S_k+O_k+X_k^j$. With this choice, we are able to
implement simultaneous binning and jamming at the transmitters
together with a power control.

\subsection{Subregions of $\Rc^{\textrm{GIC-EE}}$}

We now present a computationally simpler region.
Consider $\Pc_2\triangleq \{p \: | \: p\in\Pc_1, Q=\emptyset\}.$
\begin{corollary}\label{thm:RG2}
$\Rc_2^{\textrm{GIC-EE}}
\triangleq
\textrm{convex closure of }
\bigg\{
\bigcup\limits_{p\in\Pc_2} \Rc(p)
\bigg\}$
$\subset \Rc^{\textrm{GIC-EE}} \subset  \mathbb{C}^{\textrm{GIC-EE}}$.
\end{corollary}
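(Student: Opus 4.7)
The inclusion $\Rc^{\textrm{GIC-EE}}\subseteq\mathbb{C}^{\textrm{GIC-EE}}$ is just the content of Corollary~\ref{thm:RG}, so the only real content of this statement is $\Rc_2^{\textrm{GIC-EE}}\subseteq\Rc^{\textrm{GIC-EE}}$, which I would prove by a standard time-sharing argument. First, note that every $p\in\Pc_2$ also lies in $\Pc_1$ (take the time-sharing variable $Q$ to be degenerate), so $\bigcup_{p\in\Pc_2}\Rc(p)\subseteq\bigcup_{p\in\Pc_1}\Rc(p)$, and taking closures embeds the set-closure of the left-hand family into $\Rc^{\textrm{GIC-EE}}$. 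To upgrade this from set-closure to convex closure, it suffices to show that $\bigcup_{p\in\Pc_1}\Rc(p)$ is itself convex; its closure $\Rc^{\textrm{GIC-EE}}$ is then automatically convex as well, and hence already contains the convex hull of every subset.

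To establish the convexity of $\bigcup_{p\in\Pc_1}\Rc(p)$, I would pick two distributions $p_1,p_2\in\Pc_1$, rate pairs $(R_1^{(j)},R_2^{(j)})\in\Rc(p_j)$ witnessed by rate-splitting tuples $(R_{T_i}^{(j)},R_{T_i}^{x,(j)})$, and a mixing weight $\lambda\in(0,1)$. Define $p^*$ by enlarging the time-sharing alphabet to the disjoint union of those of $p_1$ and $p_2$, letting $Q^*$ take values in the first copy with total probability $\lambda$ (distributed as under $p_1$) and in the second copy with probability $1-\lambda$ (distributed as under $p_2$), and then declaring the conditional law of every remaining variable given $Q^*$ to be the one specified by $p_1$ or $p_2$ accordingly. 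The factorization structure required by $\Pc$ is preserved, the Gaussian conditional marginals are inherited, and the overall power constraint $\sum_q p(q)(P_k^b(q)+P_k^j(q))\le P_k$ becomes the convex combination of the two individual budgets; therefore $p^*\in\Pc_1$.

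Next, set $R_{T_i}^{*}:=\lambda R_{T_i}^{(1)}+(1-\lambda)R_{T_i}^{(2)}$ and $R_{T_i}^{x,*}$ analogously, so that the sum rates are $R_k^{*}=\lambda R_k^{(1)}+(1-\lambda)R_k^{(2)}$. Under $p^*$, every conditional mutual information $I(T_\Sc;Y_1|T_{\Sc^c},Q^*)$, $I(T_\Sc;Y_2|T_{\Sc^c},Q^*)$, and $I(T_\Sc;Y_e|T_{\Sc^c},Q^*)$ equals exactly $\lambda$ times the analogous quantity under $p_1$ plus $(1-\lambda)$ times the analogous quantity under $p_2$, because conditioning on $Q^*$ splits each mutual information into the two subchannels. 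The inequality constraints defining $\Rc_1(p^*)$, $\Rc_2(p^*)$, and the $\Sc\subsetneqq\{1,\ldots,6\}$ inequalities of $\Rc_e(p^*)$ therefore follow termwise from the corresponding inequalities under $p_1$ and $p_2$.

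The main obstacle is the \emph{equality} constraint in $\Rc_e(p)$, namely $\sum_{i=1}^{6}R_{T_i}^x=I(T_1,\ldots,T_6;Y_e|Q)$: one must check that it survives mixing, not merely that it holds as an inequality. What saves us is that both sides transform by the same convex combination: the left-hand side becomes $\lambda\sum_i R_{T_i}^{x,(1)}+(1-\lambda)\sum_i R_{T_i}^{x,(2)}$, the right-hand side becomes $\lambda I^{(1)}+(1-\lambda)I^{(2)}$ by the $Q^*$-decomposition, and the two agree because the equality holds separately under $p_1$ and $p_2$. This completes the proof that $\bigcup_{p\in\Pc_1}\Rc(p)$ is closed under binary convex combinations; extending to finite convex combinations is routine (by induction, or equivalently by using a larger time-sharing alphabet), and taking closure delivers the desired inclusion.
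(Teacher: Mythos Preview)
Your argument is correct and is precisely the standard time-sharing justification that the paper leaves implicit: the paper does not give a separate proof of this corollary, but merely remarks that ``we use the convex closure of the rate regions instead of using a time-sharing parameter in these subregions,'' relying on the fact that $Q$ in $\Pc_1$ can emulate any finite convex combination of schemes from $\Pc_2$. Your write-up fills in exactly that computation, including the point about the equality constraint in $\Rc_e(p)$ being preserved under mixing, so there is no divergence in approach.
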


We also provide a sub-region of $\Rc_2^{\textrm{GIC-EE}}$ that will be
used for numerical results. Define a set of joint distributions $\Pc_3$.
$$\Pc_3\triangleq \{p \: | \: p\in\Pc_2, P_1^s=P_1^o=P_2^s=P_2^o=0 \}.$$
\begin{corollary}\label{thm:RG3}
$\Rc_3^{\textrm{GIC-EE}}
\triangleq
\textrm{convex closure of }
\bigg\{
\bigcup\limits_{p\in\Pc_3} \Rc(p)
\bigg\}$
$\subset \Rc^{\textrm{GIC-EE}} \subset  \mathbb{C}^{\textrm{GIC-EE}}$.
\end{corollary}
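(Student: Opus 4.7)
The plan is to establish the chain of inclusions
$$\Rc_3^{\textrm{GIC-EE}} \subseteq \Rc_2^{\textrm{GIC-EE}} \subseteq \Rc^{\textrm{GIC-EE}} \subseteq \mathbb{C}^{\textrm{GIC-EE}},$$
entirely by reduction to results already in hand; no new coding argument is needed, since $\Pc_3$ is obtained from $\Pc_2$ by specializing the input distribution.

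First I would verify the set-theoretic containment $\Pc_3 \subseteq \Pc_2$. By definition, $\Pc_3$ is the subset of $\Pc_2$ obtained by imposing the additional power constraints $P_1^s=P_1^o=P_2^s=P_2^o=0$. Under these constraints the auxiliary variables $S_1,O_1,S_2,O_2$ become degenerate (point masses at $0$, viewed as zero-variance Gaussians), and the factorization required in the definition of $\Pc$ is preserved. Hence each $p \in \Pc_3$ is a valid element of $\Pc_2$, and therefore $\bigcup_{p\in\Pc_3}\Rc(p)\subseteq\bigcup_{p\in\Pc_2}\Rc(p)$.

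Next I would apply the monotonicity of the convex-closure operator: if $A\subseteq B$, then $\overline{\mathrm{conv}}(A)\subseteq\overline{\mathrm{conv}}(B)$. Applied to the inclusion above, this yields $\Rc_3^{\textrm{GIC-EE}}\subseteq\Rc_2^{\textrm{GIC-EE}}$. Finally, Corollary~\ref{thm:RG2} already provides $\Rc_2^{\textrm{GIC-EE}}\subseteq\Rc^{\textrm{GIC-EE}}$, and Corollary~\ref{thm:RG} together with Theorem~\ref{thm:R} provides $\Rc^{\textrm{GIC-EE}}\subseteq\mathbb{C}^{\textrm{GIC-EE}}$. Chaining these three inclusions gives the desired result.

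Because the argument is purely a containment-of-distributions reduction, there is no real obstacle to overcome. The only point worth flagging in a write-up is the interpretation: $\Rc_3^{\textrm{GIC-EE}}$ corresponds to the subscheme in which each transmitter uses only a common binning codebook (governed by $C_k$) and independently generated jamming noise ($X_k^j$), dropping the self-secured signal $S_k$ and the ``other'' signal $O_k$. The fact that this subregion is still achievable follows immediately from the general result by specialization, and the role of this corollary is to isolate a computationally tractable subfamily of distributions suitable for numerical evaluation, rather than to prove anything new at the coding-theoretic level.
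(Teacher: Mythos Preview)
Your proposal is correct and matches the paper's approach: the corollary is stated without proof precisely because it follows immediately from $\Pc_3\subseteq\Pc_2$ together with Corollary~\ref{thm:RG2} and Corollary~\ref{thm:RG}, exactly as you argue. Your added remark about the interpretation of $\Rc_3^{\textrm{GIC-EE}}$ as the common-codebook-plus-jamming subscheme is also consistent with how the paper uses this region in the numerical results.
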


It is important to note that we use the convex closure of the rate regions
instead of using a time-sharing parameter in these subregions.
We have already given the more general region above and we
conjecture that it is possible to extend these achievable subregions by a
different choice of channel prefixing or by using a time-sharing approach.

Accordingly, we consider a TDMA-like approach, which will show that even a simple
type of time-sharing is beneficial. Here we divide the $n$ channel uses
into two intervals of lengths represented by $\alpha n$ and $(1-\alpha)n$, where
$0\leq \alpha \leq 1$ and $\alpha n$ is assumed to be an integer. The
first period, of length $\alpha n$, is dedicated to secure transmission
for user $1$. During this time, transmitter $1$ generates binning codewords
using power $P_1^b$ and jams the channel using power $P_1^{j_1}$; and
transmitter $2$ jams the channel using power $P_2^{j_1}$. For the
second period the roles of the users are reversed, where users use powers
$P_2^b$, $P_2^{j_2}$, and $P_1^{j_2}$. We call this scheme cooperative TDMA
(C-TDMA) and obtain the following region in this case.

\begin{corollary}\label{thm:RC-TDMA}
$\Rc_{C-TDMA} \subset \Rc^{\textrm{GIC-EE}} \subset
\mathbb{C}^{\textrm{GIC-EE}}$, where
$\Rc_{C-TDMA}\triangleq \textrm{ closure of the convex hull of }$
\begin{eqnarray}
\left\{
\mathop{
\mathop{\bigcup\limits_{0\leq \alpha \leq 1}}
\limits_{
\alpha \left(P_1^b+P_1^{j_1}\right)+ \bar{\alpha} P_1^{j_2} \leq P_1}}
\limits_{\alpha P_2^{j_1} + \bar{\alpha}\left(P_2^b+P_2^{j_2}\right) \leq P_2}
(R_1,R_2)
\right\},
\end{eqnarray}
where
\begin{eqnarray}
R_1=\alpha \left[
\gamma\left(\scriptstyle{\frac{P_1^b}{1+P_1^{j_1}+c_{21}P_2^{j_1}} }\right)
-{}\gamma\left(\scriptstyle{\frac{c_{1e}P_1^b}{1+c_{1e}P_1^{j_1}
+c_{2e}P_2^{j_1}} }\right)
\right]^+,\nonumber
\end{eqnarray}
and
\begin{eqnarray}
R_2=\bar{\alpha} \left[
\gamma\left( \scriptstyle{\frac{P_2^b}{1+P_2^{j_2}+c_{12}P_1^{j_2}} } \right)
-{}\gamma\left( \scriptstyle{\frac{c_{2e}P_2^b}{1+c_{2e}P_2^{j_2}
+c_{1e}P_1^{j_2}} }\right)
\right]^+.\nonumber
\end{eqnarray}
\end{corollary}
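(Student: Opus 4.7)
The plan is to specialize Theorem~\ref{thm:R}, through its Gaussian instantiation in Corollary~\ref{thm:RG}, to a binary time-sharing configuration in which exactly one transmitter carries a secrecy codebook at a time while the other transmitter injects independent Gaussian noise into the channel.

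First I would choose the time-sharing variable $Q$ to be binary with $\Pr\{Q=1\}=\alpha$ and $\Pr\{Q=2\}=\bar{\alpha}$. In state $Q=1$ set the self-secured variable $S_1\sim\Nc(0,P_1^b)$ and take all other auxiliary codebook variables $C_1,O_1,C_2,S_2,O_2$ identically zero, with jamming powers $P_1^j(1)=P_1^{j_1}$ and $P_2^j(1)=P_2^{j_1}$; in state $Q=2$ reverse the roles by activating only $S_2\sim\Nc(0,P_2^b)$ with jamming powers $P_1^j(2)=P_1^{j_2}$ and $P_2^j(2)=P_2^{j_2}$. Averaging the per-state second moments over $Q$ yields exactly the constraints
\[
\alpha(P_1^b+P_1^{j_1})+\bar{\alpha}P_1^{j_2}\leq P_1,\quad \alpha P_2^{j_1}+\bar{\alpha}(P_2^b+P_2^{j_2})\leq P_2,
\]
so the resulting joint distribution lies in $\Pc_1$.

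Next I would verify that under this choice the subset inequalities defining $\Rc_1(p)$, $\Rc_2(p)$, and $\Rc_e(p)$ collapse to a single binding wiretap-type bound per user. Because only $S_1$ is nonzero in state $Q=1$, every term in the $\Rc_1(p)$ inequalities indexed by $i\neq 2$ vanishes and the only nontrivial constraint becomes $R_{S_1}+R_{S_1}^x\leq I(S_1;Y_1|Q)=\alpha\,\gamma(P_1^b/(1+P_1^{j_1}+c_{21}P_2^{j_1}))$. Symmetrically, $\Rc_2(p)$ reduces to the bound $R_{S_2}+R_{S_2}^x\leq\bar{\alpha}\,\gamma(P_2^b/(1+P_2^{j_2}+c_{12}P_1^{j_2}))$. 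The eavesdropper equality $R_{S_1}^x+R_{S_2}^x=I(S_1,S_2;Y_e|Q)$ splits additively across the two states because $S_2=0$ when $Q=1$ and $S_1=0$ when $Q=2$; assigning $R_{S_1}^x=\alpha\,\gamma(c_{1e}P_1^b/(1+c_{1e}P_1^{j_1}+c_{2e}P_2^{j_1}))$ and $R_{S_2}^x=\bar{\alpha}\,\gamma(c_{2e}P_2^b/(1+c_{2e}P_2^{j_2}+c_{1e}P_1^{j_2}))$ attains this equality, and the remaining subset inequalities in $\Rc_e(p)$ are met automatically by the same vanishing argument.

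Subtracting $R_{S_k}^x$ from the corresponding decoding bound gives precisely the displayed expressions for $R_1=R_{S_1}$ and $R_2=R_{S_2}$, with the $[\,\cdot\,]^+$ operator reflecting the option of transmitting no message in a state where the wiretap difference is negative. Closing under the convex hull (already explicit in the statement of $\Rc_{C-TDMA}$) and taking the union over all feasible power splits and $\alpha\in[0,1]$ completes the inclusion $\Rc_{C-TDMA}\subseteq\Rc^{\textrm{GIC-EE}}$. The principal bookkeeping obstacle is checking that each of the many subset inequalities in the three rate-region definitions is redundant; this is immediate since any auxiliary variable set to zero contributes nothing to either side of a conditional mutual-information bound, so no coding argument beyond Theorem~\ref{thm:R} is needed.
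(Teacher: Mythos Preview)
Your proposal is correct and follows exactly the route the paper sketches in the paragraph preceding Corollary~\ref{thm:RC-TDMA}: a binary time-sharing variable, with only the self-secured signal $S_k$ active in state $k$ and all remaining power devoted to Gaussian jamming, so that the region of Corollary~\ref{thm:RG} collapses to a single wiretap bound per state. The paper itself does not supply a formal proof beyond that description, and your verification that the many subset constraints in $\Rc_1(p),\Rc_2(p),\Rc_e(p)$ become vacuous (since all inactive auxiliaries are deterministic) is precisely the bookkeeping needed to make the corollary rigorous.
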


Note that, we only consider adding randomness by noise injection for the
cooperative TDMA scheme above. However, our coding scheme presented in
the previous section allows for an implementation of more general
cooperation strategies, in which users can add randomness to the channel
in two ways: adding randomness via cooperative binning and adding
randomnees via cooperative channel prefixing. A user by implementing
\emph{both} of these approaches can help the other one in a time-division
setting. We again remark that the proposed cooperative binning and channel
prefixing scheme allows even more general approaches such as having more
than two time-sharing periods.

\subsection{Numerical Results and Discussion}
In this section we provide numerical results for the following
subregions of the achievable region given by Corollary~\ref{thm:RG}.

  $1$) $\Rc_3^{\textrm{GIC-EE}}$: This region is provided above, where
  we utilize both cooperative binning and channel prefixing.

  $2$) $\Rc_3^{\textrm{GIC-EE}}{\textrm{(b or cp)}}$: Here we utilize
  either cooperative binning or channel prefixing scheme
  at a transmitter, but not both.

  $3$) $\Rc_3^{\textrm{GIC-EE}}{\textrm{(ncp)}}$: Here we only utilize
  cooperative binning. Accordingly, jamming powers are set to zero.

  $4$) $R_{C-TDMA}$: This region is an example of utilizing both
  time-sharing and cooperative channel prefixing. No cooperative binning
  is used.

  $5$) $R_{C-TDMA}{\textrm{(nscp)}}$: Here we do not allow transmitters
  to jam the channel during their dedicated time slots and call this case
  no self channel prefixing (nscp).

  $6$) $R_{C-TDMA}{\textrm{(ncp)}}$: Here no channel prefixing is implemented.
  This case refers to conventional TDMA scheme, in which users are allowed to
  transmit during only their assigned slots. Hence, this scheme only utilizes
  time-sharing.

Numerical results are provided in Fig.~$2$ and Fig.~$3$. The first
scenario depicted in Fig.~$2$ shows the benefits of cooperative binning
technique. Also, cooperative channel prefixing does not help to enlarge
the secure rate region in this scenario. Secondly, in Fig.~$3$, we
consider an asymmetric scenario, in which the first user has a weak
channel to the eavesdropper but the second user has a strong channel to the
eavesdropper. Here, the second user can help the first one to increase its
secrecy rate. However, channel prefixing and time-sharing does not
help to the second user as it can not achieve positive secure rate
without an implementation of cooperative binning. Remarkable, cooperative
binning technique helps the second user to achieve positive secure transmission
rate in this case. These observations suggest the implementation of all
three techniques (cooperative binning, cooperative channel prefixing, and time-sharing)
as considered in our general rate region, i.e., $\Rc^{\textrm{GIC-EE}}$.

\begin{figure}[htb] 
    \centering
    \includegraphics[width=0.7\columnwidth]{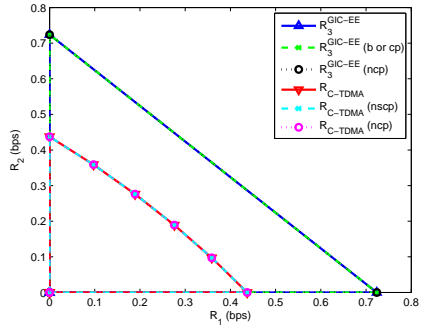}
    \caption{Numerical results for GIC-EE with $c_{12}=c_{21}=1.9$,
    $c_{1e}=c_{2e}=0.5$, $P_{1}=P_{2}=10$.}
\end{figure}

\begin{figure}[htb] 
    \centering
    \includegraphics[width=0.7\columnwidth]{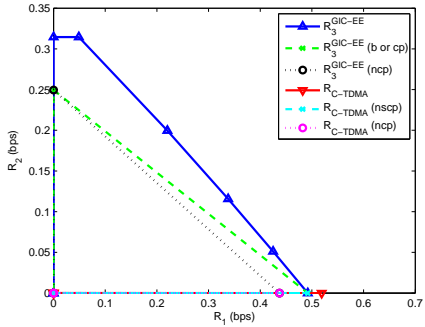}
    \caption{Numerical results for GIC-EE with $c_{12}=1.9$,
    $c_{21}=1$, $c_{1e}=0.5$, $c_{2e}=1.6$, $P_{1}=P_{2}=10$.}
\end{figure}

\subsection{Some Implications of the Proposed Scheme}
It can be shown that the proposed scheme reduces to the noise forwarding
scheme of~\cite{Lai:The} for the discrete memoryless relay-eavesdropper
channel. Remarkable, the channel prefixing technique can be exploited in this
scenario to increase the achievable secure rates. For example, for the
Gaussian channel, injecting i.i.d. noise samples can increase the achievable
secure transmission rates as shown in~\cite{Tang:The08}. Our result here
shows that the gain resulting from the noise injection comes from the
exploitation of the channel prefixing technique. In addition, the proposed
scheme, when specialized to a Gaussian multiple-access scenario, results in
an achievable region that generalizes and extends the proposed regions given
in~\cite{Tekin:The08} due to the implementation of simultaneous cooperative
binning and channel prefixing at the transmitters together with more general
time-sharing approaches.


\section{Conclusion}
\label{sec:Conclusion}

In this work, we have considered two-user interference channels with an
external eavesdropper. We have proposed the cooperative binning and channel
prefixing scheme that utilizes random binning, channel prefixing, and
time-sharing techniques and allows transmitters to cooperate in adding
randomness to the channel. For Gaussian interference channels, the
channel prefixing technique is exploited by letting users to inject
independently generated noise samples to the channel. The most
interesting aspect of our results is, perhaps, the unveiling of the
role of interference in cooperatively adding randomness to the channel to
increase the secrecy rates of multi-user networks.



\begin{thebibliography}{10}

\bibitem{Sato:The81}
H.~Sato, ``The capacity of the Gaussian interference channel
under strong interference,''
\emph{IEEE Trans. Inform. Theory},
vol.~27, no.~6, pp. 786-788, Nov. 1981.

\bibitem{Shang:A07}
X.~Shang, G.~Kramer, and B.~Chen, ``A new outer bound and the noisy-interference
sum-rate capacity for Gaussian interference channels,''
\emph{IEEE Trans. Inform. Theory},
submitted for publication.

\bibitem{Annapureddy:A08}
V.~S.~Annapureddy and V.~V.~Veeravalli, ``Gaussian interference
networks: Sum capacity in the low interference regime and new outer bounds
on the capacity region,''
\emph{IEEE Trans. Inform. Theory},
submitted for publication.

\bibitem{Motahari:A08}
A.~S.~Motahari and A.~K.~Khandani, ``Capacity bounds
for the Gaussian interference channel,''
\emph{IEEE Trans. Inform. Theory},
submitted for publication.

\bibitem{Liang:Cognitive07}
Y.~Liang, A.~Somekh-Baruch, H.~V.~Poor, S.~Shamai (Shitz), and S.~Verdu,
``Cognitive interference channels with confidential messages,'' in
\emph{Proc. 45th Annual Allerton Conference on
Communication, Control and Computing}, Monticello, IL, Sept. 2007.

\bibitem{Liu:Discrete08}
R.~Liu, I.~Maric, P.~Spasojevic, and R.~D.~Yates, ``Discrete memoryless
interference and broadcast channels with confidential messages: Secrecy
rate regions,''
\emph{IEEE Trans. Inform. Theory},
vol.~54, no.~6, pp. 2493-2507, June 2008.

\bibitem{Koyluoglu:On08}
O.~O.~Koyluoglu, H.~El~Gamal, L.~Lai, and H.~V.~Poor
``On the secure degrees of freedom in the $K$-user
Gaussian interference channel,'' in
\emph{Proc. 2008 {IEEE} International Symposium on Information Theory ({ISIT}'08)},
Toronto, ON, Canada, July 2008.

\bibitem{Wyner:Wiretap75}
A.~D.~Wyner, ``The wire-tap channel,''
\emph{The Bell System Technical Journal},
vol.~54, no.~8, pp. 1355-1387, Oct. 1975.

\bibitem{Csiszar:Broadcast78}
I.~Csisz\'ar and J.~{K\"{o}rner}, ``Broadcast channels with confidential messages,''
\emph{IEEE Trans. Inform. Theory},
vol.~24, no.~3, pp. 339-348, May 1978.

\bibitem{Han:A81}
T.~S.~Han and K.~Kobayashi, ``A new achievable rate region for the interference channel,''
\emph{IEEE Trans. Inform. Theory},
vol.~27, no.~1, pp. 49-60, Jan. 1981.

\bibitem{Lai:The}
L.~Lai and H.~{El Gamal}, ``The relay-eavesdropper channel:
Cooperation for secrecy,''
\emph{IEEE Trans. Inform. Theory},
to be published.

\bibitem{Tekin:The08}
E.~Tekin and A.~Yener, ``The general Gaussian multiple-access and
two-way wiretap channels: Achievable rates and cooperative jamming,''
\emph{IEEE Trans. Inform. Theory},
vol.~54, no.~6, pp. 2735-2751, June 2008.

\bibitem{Negi:Secret05}
R.~Negi and S.~Goel,
``Secret communication using artificial noise,'' in
\emph{Proc. 2005 {IEEE} 62nd Vehicular Technology Conference, (VTC-2005-Fall)},
Sept. 2005.

\bibitem{Carleial:Interference78}
A.~B.~Carleial, ``Interference channels,''
\emph{IEEE Trans. Inform. Theory},
vol.~24, no.~1, pp. 60-70, Jan. 1978.

\bibitem{ThomasAndCover}
T.~Cover and J.~Thomas,
\emph{Elements of Information Theory},
John Wiley and Sons, Inc., 1991.

\bibitem{Tang:The08}
X.~Tang, R.~Liu, P.~Spasojevic, and H.~V.~Poor
``The Gaussian wiretap channel with a helping interferer,'' in
\emph{Proc. 2008 {IEEE} International Symposium on Information Theory ({ISIT}'08)},
Toronto, ON, Canada, July 2008.

\end{thebibliography}


\end{document}